\theoremstyle{definition}
\theoremstyle{plain}
\newtheorem{theorem}{Theorem}%[section]
\newtheorem{corollary}{Corollary}
\newcommand{\blue}{\textcolor{black}}
\begin{document}

\begin{frontmatter}
\title{Non-binary universal tree-based networks}

\author[label1]{Mareike Fischer\corref{cor1}}
\ead{email@mareikefischer.de}
\cortext[cor1]{Corresponding author}
\author[label1]{Michelle Galla}
\author[label1]{Kristina Wicke}

\address[label1]{Institute of Mathematics and Computer Science, University of Greifswald, Greifswald, Germany}

\begin{abstract} 
A tree-based network $N$ on $X$ is called universal if every phylogenetic tree on $X$ is a base tree for $N$. Recently, binary universal tree-based networks have attracted great attention in the literature and their existence has been analyzed 
\blue{in various studies}. In this note, we extend the analysis to non-binary networks and show that there exist both a rooted and an unrooted non-binary universal tree-based network with $n$ leaves for all positive integers $n$.
\end{abstract}

\begin{keyword}
phylogenetic tree \sep  phylogenetic network \sep tree-based network \sep universal tree-based network 
\end{keyword}
\end{frontmatter}

\section{Introduction and Preliminaries}
Phylogenetic networks are a generalization of phylogenetic trees allowing for the representation of reticulate evolutionary events such as hybridization and horizontal gene transfer. Even though the existence of reticulate evolutionary events is widely accepted, it has been argued that evolution is still fundamentally tree-like with some occasional events of e.g. horizontal gene transfer. This has led to the introduction of so-called tree-based networks, i.e. networks that can be obtained from a tree by adding additional edges (cf. \citet{Francis2015}). In the following, we consider the existence of one particular class of tree-based networks, namely universal tree-based networks. However, we start by introducing some definitions and notations.

Let $X=\{1, \ldots, n\}$ be a non-empty finite set (e.g. of taxa or species). 

An \emph{unrooted phylogenetic network} $N$ on $X$ is a connected, simple graph $G=(V,E)$ with $X \subseteq V$ and no degree-2 vertices, where the set of degree-1 vertices (referred to as the \emph{leaves} of the network) is bijectively labeled and thus identified with $X$. Such an unrooted network is called \emph{unrooted binary} if every non-leaf vertex has degree 3. We call an unrooted network \emph{unrooted non-binary} if it is not necessarily binary.

A \emph{rooted phylogenetic network} $N$ on $X$ is a rooted directed acyclic graph $G=(V,E)$ with no parallel edges satisfying the following properties:
\begin{enumerate}[(i)]
\item the root has indegree 0 and outdegree 2 or more;
\item all vertices with outdegree 0 have indegree 1, and the set of vertices with outdegree 0 is identified with $X$ (and the vertices in $X$ are again called \emph{leaves});
\item all other vertices either have indegree 1 and outdegree 2 or more (in which case they are called \emph{tree vertices}) or indegree 2 or more and outdegree 1 (in which case they are called \emph{reticulations} or \emph{reticulation vertices}).
\end{enumerate}
If the root has outdegree 2, all reticulations have indegree 2 and all tree-vertices have outdegree 2, the network is called \emph{rooted binary}. We call a rooted network \emph{rooted non-binary} if it is not necessarily binary.
For technical reasons, if $\vert X \vert = 1$, we allow an (un)rooted network to consist of a single leaf (which in case of a rooted network is then at the same time considered to be the root). 
Moreover, note than an (un)rooted \emph{phylogenetic tree} on $X$ is an (un)rooted phylogenetic network whose underlying graph structure is a tree. Furthermore, we refer to 
\blue{non-phylogenetic, i.e. non-leaf-labeled, trees as }\emph{tree shapes}.

An (un)rooted phylogenetic network $N$ on $X$ is called \emph{tree-based} if there is a spanning tree $T = (V,E')$ for $N$ (with $E' \subseteq E$) whose leaf set is equal to $X$. $T$ is then called a \emph{support tree} (for $N$). Moreover, the tree $T'$ that can be obtained from $T$ by suppressing potential degree-2 vertices is called a \emph{base tree} (for $N$). Note that the existence of a support tree $T$ for $N$ implies the existence of a base tree $T'$ for $N$. Moreover, note that from a graph-theoretical point of view, the support tree $T$ is a \emph{subdivision} of the base tree $T'$ (as $T$ can be obtained from $T'$ by subdividing edges of $T$ with degree-2 vertices).

Now, an (un)rooted tree-based network $N$ on $X$ is called \emph{universal} if every 
(un)rooted phylogenetic tree on $X$ is a base tree for $N$.

Universal tree-based networks have recently gained considerable interest in the literature. First, \citet{Francis2015}, who introduced the class of tree-based networks, showed that there is a rooted binary universal tree-based network on $X$ for $n=3$ and asked whether such a network exists in general. This was answered affirmatively by \citet{Hayamizu2016} and, independently, by \citet{Zhang2016}, who gave explicit constructions for such networks for all positive integers $n$. While the construction in \citet{Hayamizu2016} contains $\Theta(n!)$ reticulations, the construction in \citet{Zhang2016} contains $\Theta(n^2)$ reticulations. 
More recently, \citet{Bordewich2017} showed that a rooted binary universal tree-based network with $n$ leaves has $\Omega(n \log (n))$ reticulations and gave a construction of a rooted binary universal tree-based network with $O(n \log (n))$ reticulations.
Moreover, \citet{Francis2018} have recently shown that the existence of a rooted binary universal tree-based network implies the existence of an unrooted binary universal tree-based network. Note, however, that so far all considerations of universal tree-based networks in the literature have been concerned with binary networks. In this note, we extend these findings to non-binary networks and constructively show that there exist both a rooted and an unrooted non-binary universal tree-based network with $n$ leaves for all positive integers $n$.

\section{Results}
\subsection{Rooted universal tree-based networks} \label{rooted_universal}

\begin{theorem} \label{thm_rooted_universal}
For all positive integers $n$, there exists a rooted non-binary universal tree-based network with $n$ leaves.
\end{theorem}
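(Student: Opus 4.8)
The plan is to prove the theorem constructively: for each $n$ I would exhibit one explicit rooted non-binary network $N_n$ on $X$ and then verify directly that every rooted phylogenetic tree on $X$ arises as a base tree. The guiding principle is to exploit the extra freedom of the non-binary setting, namely that internal vertices may have arbitrarily high degree and reticulations arbitrarily high in-degree, so that a single fixed network can simulate the wildly different shapes of all trees on $X$ by \emph{reusing} the same pool of internal vertices instead of storing a separate copy of each tree. (A naive ``overlay all trees'' construction fails precisely because the internal vertices of the inactive copies cannot be absorbed into a spanning tree, so economy of internal vertices is essential.)

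Concretely, I would introduce a pool of $n-1$ internal ``hub'' positions $h_1, \dots, h_{n-1}$ (enough to host the internal vertices of any tree on $X$, since a rooted phylogenetic tree on $n$ leaves has at most $n-1$ internal vertices), together with one reticulation $r_i$ feeding each leaf $i$. The hubs would be wired acyclically so that $h_a$ may serve as a parent of $h_b$ whenever $a<b$, and every hub would be allowed to be the parent of every leaf-reticulation $r_i$. Given a target tree $T$, I would topologically label its internal vertices $1,2,\ldots,k$ (root first, so each vertex's parent carries a smaller label), identify internal vertex $s$ with hub $h_s$, and read off the corresponding support tree inside $N_n$; suppressing degree-2 vertices then returns $T$. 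To keep $N_n$ a legal rooted network I would split each hub into a reticulation part gathering its incoming edges and a tree-vertex part distributing its outgoing edges, since the definition forbids vertices that simultaneously have in-degree and out-degree at least $2$.

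The verification then splits into three parts. First, that $N_n$ is a well-defined rooted non-binary phylogenetic network: a directed acyclic graph with no parallel edges, a root of out-degree $\geq 2$, leaves of in-degree exactly $1$, and every remaining vertex either a tree vertex or a reticulation. Second, that $N_n$ is tree-based, which follows from exhibiting a single spanning tree with leaf set $X$. Third, the universality statement itself, that the support tree read off from an arbitrary $T$ is genuinely a spanning tree of $N_n$ whose suppression yields $T$.

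The hard part will be the third step, and specifically the requirement that the embedding be a \emph{spanning} tree of $N_n$ rather than merely a subgraph. When $T$ is far from binary it uses only $k<n-1$ of the hubs, so the remaining hubs threaten to be left uncovered or, worse, to surface as spurious out-degree-$0$ vertices --- which is illegal, because a support tree must have leaf set exactly $X$. I would resolve this by absorbing every unused hub as a degree-2 pass-through inserted along one of the pendant edges leading to a leaf; the ordering constraint $a<b$ can always be met, because the unused hubs carry the largest labels, so the suppressed base tree is unaffected while every vertex of $N_n$ gets covered. Checking that this absorption works uniformly for all tree shapes simultaneously, together with the degree bookkeeping that keeps $N_n$ legal (the reticulation/tree split above, the absence of parallel edges, and the small cases $n\in\{1,2\}$), is where the argument must be carried out with care.
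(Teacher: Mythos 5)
Your approach is sound, but it is genuinely different from the paper's. The paper builds its network in two parts: an upper part $U_n$ obtained by overlaying a rooted star tree with an $(n-2)\times(n-2)$ grid of alternating tree vertices and reticulations, shown by induction on $n$ (deleting a cherry's children, embedding the smaller tree shape, then re-inserting the deleted leaves via an explicit algorithm) to have \emph{every tree shape} as a base tree; and a lower part (a Bene\v{s} network or the complete bipartite graph $K_{n,n}$) whose only job is to realize an arbitrary permutation of the leaves, so that shapes become labeled trees. Your construction collapses both jobs into one gadget: because every hub is wired to every leaf-reticulation $r_i$, the leaf labeling is handled directly and no permutation stage is needed; and because the hubs are wired as a complete acyclic order, any target tree embeds in one shot via a topological labeling of its internal vertices, with no induction. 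What your route buys is economy and simplicity: it needs only the hub reticulations plus the $n$ leaf reticulations, i.e.\ $O(n)$ reticulations of high in-degree, versus the paper's $(n-2)^2 + n$, which is directly relevant to the open question raised in the paper's discussion about minimizing reticulation number in the non-binary setting (the binary $\Omega(n\log n)$ lower bound of Bordewich et al.\ does not apply once in-degrees may grow with $n$). What the paper's route buys is continuity with the binary constructions it extends, and a reusable ``all shapes'' component $U_n$.

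One concrete wrinkle you should fix when carrying out the degree bookkeeping you flagged: if you split \emph{every} hub into a reticulation part and a tree-vertex part, then the reticulation part of $h_2$ has in-degree $1$ (its only possible parent is $h_1$), and a vertex of in-degree $1$ and out-degree $1$ is forbidden by the definition of a rooted phylogenetic network. The repair is immediate --- leave $h_1$ (the root) and $h_2$ unsplit, splitting only hubs $h_s$ with $s\geq 3$ --- and similarly the cases $n\in\{1,2\}$ must be treated separately since the leaf reticulations $r_i$ only attain in-degree $\geq 2$ when $n\geq 3$; with these adjustments your embedding argument, including the absorption of unused hubs as degree-2 pass-throughs on pendant edges (which works precisely because unused hubs carry the largest labels), goes through.
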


In the proof of Theorem \ref{thm_rooted_universal} we will present a construction of a rooted tree-based network $N$ on $X$ for each $n$. Following the constructions in \citet{Hayamizu2016}, \citet{Zhang2016} and \citet{Bordewich2017}, this construction consists of two parts: the upper part, which contains the root, is a network on $n$ leaves that has every non-binary tree shape on $n$ leaves as a base tree; the lower part, which contains the leaves, reorders the leaves of these tree shapes, in order to enable any permutation of leaves and thus, to enable every binary or non-binary phylogenetic tree on $X$ to be a base tree for this network.
For the latter, one can for example use a so-called \emph{Bene\v{s} network} (cf. \citet{Benes1964a, Benes1964}) as in \citet{Bordewich2017}. Alternatively, one can simply use a complete bipartite graph $K_{n,n}$ for this purpose (cf. Figure \ref{Fig_Embedding}(a) and (b)).

Thus, in the following we will only show that the upper part of our construction has every tree shape as a base tree. Analogously to \citet{Bordewich2017} it then follows that the combination of the upper part with a Bene\v{s} network or a complete bipartite graph yields a universal tree-based network.

\begin{proof}[Proof (Theorem \ref{thm_rooted_universal})]
For all positive integers $n$, we now give a construction of a rooted phylogenetic network $U_n$ on $n$ leaves that has every tree shape on $n$ leaves as base tree. 
We begin by describing the construction of $U_n$. 
First of all, for $n=1$, $U_n$ consists of a single vertex. Now, let $n \geq 2$. 
Then, in order to construct $U_n$, we start with a rooted star tree $T^\ast$ with root $\rho$ on $n$ leaves where a rooted star tree is a rooted tree shape such that all leaves are adjacent to the root and:
	\begin{enumerate}
	\item Add attachment points to the edges of $T^\ast$ as follows:
		\begin{enumerate}
		\item [--] For leaf $1$ and $n$, add $n-2$ attachment points
		on the edges $(\rho,1)$ and $(\rho, n)$, respectively, and label them $t_1^1, t_1^2, \ldots, t_1^{n-2}$  and $t_n^1, t_n^2, \ldots, t_n^{n-2}$, respectively (starting the labeling at the attachment point closest to the root; note that these vertices will be tree vertices in the final network);
		\item [--] For each leaf $l=2,3, \ldots, n-2$, add $2n-4$ attachment points on the edge $(\rho, l)$ and label them $r_l^1, t_l^1, r_l^2, t_l^2, \ldots,  r_l^{n-2}, t_l^{n-2}$ (again, starting the labeling at the attachment point closest to the root; note that all vertices labeled with \enquote{$r$} will be reticulation vertices in the final network and all vertices labeled with \enquote{$t$} will be tree vertices);
		\item [--]  For leaf $n-1$, add $2n-5$ attachment points on edge $(\rho, n-1)$ and label them $r_{n-1}^1, t_{n-1}^1, r_{n-1}^2, t_{n-1}^2,$ $\ldots, t_{n-1}^{n-3}, r_{n-1}^{n-2}$ (again, starting the labeling at the attachment point closest to the root; note that there is no attachment point $t_{n-1}^{n-2}$, however, all vertices labeled with \enquote{$r$} will be reticulation vertices in the final network and all vertices labeled with \enquote{$t$} will be tree vertices).
		\end{enumerate}
	\item Add the following edges between attachment points: \label{edges} 
		\begin{enumerate}
		\item [--]  $(t_i^k, r_{i+1}^k)$ for $i=2, \ldots, n-2$ and $k=1, \ldots, n-2$ (horizontal edges between tree and reticulation vertices);
		\item [--]  $(t_i^k,r_j^{k+1})$ for $i=2, \ldots, n-2$, $j=i+1, \ldots, n-1$ and $k=1, \ldots, n-3$ (diagonal edges between tree vertices and reticulation vertices from left to right);
		\item [--]  $(t_i^k,r_j^{k+1})$ for $i=3, \ldots, n-1$, $j=2, \ldots, i-1$ and $k=1, \ldots, n-3$ (diagonal edges between tree vertices and reticulation vertices from right to left);
		\item [--] $(t_i^k, r_j^k)$ for $i \in \{1,n\}$,  $j=2, \ldots, n-1$ and $k=1, \ldots, n-2$ (diagonal edges between tree vertices on the paths from $\rho$ to leaves $1$ and $n$, respectively, and reticulation vertices on the paths from $\rho$ to leaves $2, \ldots, n-1$).
		\end{enumerate}
	\end{enumerate}

Figure~\ref{Fig_Embedding}(e) shows the resulting construction for $n=3$ and Figure~\ref{Fig_Embedding}(i) shows the construction for $n=5$. 

We now use induction on $n$ to show that -- ignoring the leaf labels -- every tree shape on $n$ leaves is base tree of $U_n$. Since there is exactly one tree shape for $n=1$ (consisting of a single vertex) and $n=2$, the base case holds for all $n \leq 2$. 
Now, suppose that the claim holds for up to $n-1$ leaves and consider the network $U_n$ on $n$ leaves (with $n \geq 3$).

Note that as the basic structure of $U_n$ is a star tree on $n$ leaves, the star tree on $n$ trivially is a base tree of $U_n$. Therefore, we will now show that any other tree shape on $n$ leaves is also a base tree of $U_n$.

Thus, let $T_n$ be an arbitrary tree shape (that is not a star tree) with $n$ leaves and root $\rho$.
We will now show that $T_n$ is a base tree of $U_n$ by constructing an explicit embedding of $T_n$ into $U_n$. i.e., we construct a subdivision of $T_n$ covering all vertices of $U_n$ such that the leaf sets of $T_n$ and $U_n$ coincide.
As $n \geq 3$, we know that $T_n$ contains at least one cherry $[u,v]$, i.e. a pair of leaves $u$ and $v$ who share a common parent (cf. Proposition 1.2.5 in \citet{Semple2003}), say $w$. As, by assumption, $T_n$ is not a star tree, we may assume that $w \neq \rho$. Suppose that $w$ has $k \geq 2$, children in total (including $u$ and $v$). Moreover, without loss of generality we may assume that the children of $w$ are labeled $1, \ldots, k$ when enumerating all leaves and are positioned at the outermost left of the tree when drawing it in the plane.
We now delete all children of $w$ (which implies that $w$ is now a leaf) and retrieve a tree shape $T_{n-k+1}$ with $n-k+1$ leaves. 
%As $n-k+1 < n$, by induction $T_{n-k+1}$ with vertex set $V(T_{n-k+1})$ and edge set $E(T_{n-k+1})$ is a base tree of $U_{n-k+1}$ (see Figure \ref{Fig_Embedding}(c) -- (e)). 
As $n-k+1 < n$, by induction $T_{n-k+1}$ is a base tree of $U_{n-k+1}$. Let $V(T_{n-k+1})$ and $E(T_{n-k+1})$ denote the vertex set, respectively edge set, of the underlying embedding of $T_{n-k+1}$ into $U_{n-k+1}$.
In the following, we will first show that $T_{n-k+1}$ can also be embedded in $U_n$; we will then re-introduce the deleted children of $w$ and show that this yields a base tree of $U_n$. 

Note that by construction $T_{n-k+1}$ contains leaves labeled with $w$, $k+1$, $k+2, \ldots, n$. Before we embed $T_{n-k+1}$ into $U_n$, we relabel some vertices of $U_{n-k+1}$. To be precise, we rename vertices (if they exist) as follows (as an example see Figure \ref{Fig_Embedding}(e)):
\begin{align*}
w &\hookrightarrow 1 \\
t_{n-k+1}^{l} &\hookrightarrow t_{n}^{l} \, \text{for } \, l=1, \ldots, {n-k-1} \\
r_{j}^{l} &\hookrightarrow r_{j+k-1}^{l} \text{ and } t_{j}^{l} \hookrightarrow t_{j+k-1}^{l} 
\text{for } j=2, \ldots, n-k \text{ and } l=1, \ldots, {n-k-1}.
\end{align*}

We now sequentially extend the network $U_{n-k+1}$ to $U_n$ by introducing additional vertices and edges.

First of all, we add attachment points on existing edges (cf. Figure \ref{Fig_Embedding}(f)), where for technical reasons $t_1^0 = t_n^0 = \rho$.
	\begin{enumerate}
	\item [--]  edge $(t_1^{n-k-1}, 1)$: $k-1$ attachment points $t_1^{n-k}, t_1^{n-k+1}, \ldots, t_1^{n-2}$;
	\item [--] edge $(t_n^{n-k-1},n)$: $k-1$ attachment points $t_n^{n-k}, t_n^{n-k+1}, \ldots, t_n^{n-2}$;
	\item [--] edge $(r_{n-1}^{n-k-1}, n-1)$ (if it exists): $2(k-1)=2k-2$ attachment points $t_{n-1}^{n-k-1}, r_{n-1}^{n-k}, t_{n-1}^{n-k}, \ldots, r_{n-1}^{n-2}$;
	\item [--] edge $(t_{j+k-1}^{n-k-1}, j+k-1)$ for $j=2, \ldots, n-k-1$ (if it exists): $2(k-1)=2k-2$ attachment points $r_{j+k-1}^{n-k}, t_{j+k-1}^{n-k}, r_{j+k-1}^{n-k+1}, t_{j+k-1}^{n-k+1}, \ldots r_{j+k-1}^{n-2}, t_{j+k-1}^{n-2} $
\end{enumerate}

We add all newly introduced edges to $E(T_{n-k+1})$, i.e. we extend the embedding of  $T_{n-k+1}$ to cover all newly introduced attachment points (as an example see Figure \ref{Fig_Embedding}(f)). 

We then add $k-1$ edges connecting the root to leaves $2, \ldots, k$ and on each of these edges add $2n-4$ attachment points called $r_i^{1}, t_i^1, \ldots, r_i^{n-2}, t_i^{n-2}$ for $i=2, \ldots, k$ (as an example see Figure \ref{Fig_Embedding}(g)). If $k=n-1$, we only add $2n-5$ attachment points on edge $(\rho, {n-1})$. In particular, we do not add the attachment point $t_{n-k}^{n-2}$.

In order to complete the construction of $U_n$, we add all required edges between newly introduced vertices, i.e. we complete the construction of $U_n$ according to the construction principle presented at the beginning of the proof (see page \pageref{edges}; as an example see Figure \ref{Fig_Embedding}(h)).

We now re-introduce the children of $w$ to the embedding of $T_{n-k+1}$ in order to obtain an embedding of $T_n$, i.e. we re-introduce the leaves $2, \ldots, k$ (note that we do not re-introduce leaf $1$, as this was already re-introduced in a previous step). We do this in the following way (cf. Algorithm \ref{alg}): 

\begin{algorithm}[h]
	$i=1$\;
	\While{$i < n-k$}{
		\uIf{edge $(t_1^i, r_{k+1}^i)$ is in $E(T_{n-k+1})$}{
			remove edge $(t_1^i, r_{k+1}^i)$ from $E(T_{n-k+1})$\;
			add edges $(t_1^i, r_2^i), (r_2^i, t_2^i), (t_2^i, r_3^i), (r_3^i, t_3^i), \ldots, (r_k^i, t_k^i), (t_k^i, r_{k+1}^i)$\;
			$i=i+1$\;
		}	
		\Else{
				add the following edges to $E(T_{n-k+1})$:
				\begin{itemize}
				\item $(t_1^i, r_2^i), (t_1^i, r_3^i), \ldots, (t_1^i, r_{k}^i)$\; 
				\item $(r_j^i, t_j^i), (t_j^i, r_j^{i+1}), (r_j^{i+1}, t_j^{i+1}), \ldots, (r_j^{n-2}, t_j^{n-2}), (t_j^{n-2}, j)$ for $j=2, \ldots, k-1$\;
			
				\end{itemize}	
				%%%
				\uIf{$k = n-1$}{
					add the following edges to $E(T_{n-k+1})$:
					\begin{itemize}
					\item $(r_k^i, t_k^i), (t_k^i, r_k^{i+1}), (r_k^{i+1}, t_k^{i+1}), \ldots, (t_k^{n-3}, r_k^{n-2}), (r_k^{n-2}, k)$\;
					\end{itemize}
				}
				\Else{
					add the following edges to $E(T_{n-k+1})$:
					\begin{itemize}
					\item $(r_k^i, t_k^i), (t_k^i, r_k^{i+1}), (r_k^{i+1}, t_k^{i+1}), \ldots, (r_k^{n-2}, t_k^{n-2}), (t_k^{n-2}, k)$\;
					\end{itemize}
				}
			%}	
		}
	}
	\caption{} \label{alg}
\end{algorithm}

These operations transform the emebdding of $T_{n-k+1}$ into an embedding of $T_n$ in such a way that all vertices of $U_n$ are also vertices of the embedding of $T_{n}$ and the leaf sets of $U_n$ and $T_n$ coincide. Thus, $T_n$ is a base tree of $U_n$ (as an example see Figure \ref{Fig_Embedding}(i)). As $T_n$ was an arbitrary tree shape (that is not the star tree) on $n$ leaves and as the star tree is trivially a base tree for $U_n$ this completes the proof.

\end{proof}

\subsection{Unrooted universal treebased networks} \label{Sec_unrooted_universal}
Even though Theorem \ref{thm_rooted_universal} states the existence of a rooted non-binary universal treebased network on $n$ leaves for all positive integers $n$, we can use the same construction to show the following statement for unrooted networks.
\begin{corollary}
For all positive integers $n$, there exists an unrooted non-binary universal treebased network on $n$ leaves.
\end{corollary}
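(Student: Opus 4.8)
The plan is to reuse the rooted construction from Theorem~\ref{thm_rooted_universal} rather than build a new network from scratch. Let $N$ be the rooted non-binary universal tree-based network on $X$ provided by Theorem~\ref{thm_rooted_universal}, and let $N^u$ denote the unrooted network obtained from $N$ by ignoring the orientation of all edges (suppressing the root should it become a degree-2 vertex). I would first check that $N^u$ is a legitimate unrooted phylogenetic network: it is connected and simple because $N$ is, it has no parallel edges since a DAG cannot contain both $(a,b)$ and $(b,a)$, its degree-1 vertices are exactly the leaves $X$ of $N$, and any degree-2 vertices arising from unrooting can be suppressed without affecting the leaf set. Since a support tree of $N$ becomes, after forgetting orientations, a spanning tree of $N^u$ with leaf set $X$, the network $N^u$ is tree-based. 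It therefore remains to establish universality.

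For universality I would argue as follows. Let $T$ be an arbitrary unrooted phylogenetic tree on $X$; for $n \le 2$ the single such tree is handled directly, so assume $n \ge 3$. Then $T$ has an internal vertex $v$, and since $T$ has no degree-2 vertices, $\deg(v) \ge 3$. Rooting $T$ at $v$ (directing all edges away from $v$) yields a rooted phylogenetic tree $T'$ on $X$ in the sense of the definition above: its root $v$ has outdegree $\ge 3 \ge 2$, every other internal vertex has indegree $1$ and outdegree $\ge 2$, and the leaves are exactly $X$. By Theorem~\ref{thm_rooted_universal}, $T'$ is a base tree of $N$, so there is a support tree $S$, that is, a subdivision of $T'$ spanning $N$ whose leaf set is $X$.

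The final step is to transport this embedding to the unrooted setting. Forgetting orientations turns $S$ into a spanning tree $S^u$ of $N^u$ with leaf set $X$. Because subdividing the edges incident to the root does not change the number of edges leaving it, the root of $S$ has the same outdegree as $v$ in $T'$, namely $\ge 3$; hence it is not a degree-2 vertex of $S^u$ and survives suppression. Consequently, suppressing the degree-2 vertices of $S^u$ returns precisely the unrooted tree underlying $T'$, which is $T$ itself. Thus $T$ is a base tree of $N^u$, and since $T$ was arbitrary, $N^u$ is universal.

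I expect the main obstacle to be the bookkeeping at the root, i.e.\ making sure that rooting $T$ and unrooting both $N$ and $S$ are genuinely inverse operations on the relevant vertex. Concretely, one must verify that the high-degree root vertex of $N$ does not collapse to a degree-2 vertex under unrooting, and that the chosen root $v$ of $T$ reappears with its full degree after unrooting $T'$, so that no spurious suppression distorts the recovered base tree. Once this root compatibility and the degenerate cases $n \le 2$ are checked, the remaining steps are routine translations between the rooted and unrooted definitions.
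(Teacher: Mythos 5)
Your proposal is correct and follows essentially the same route as the paper: the paper likewise handles $n\le 2$ directly and, for $n\ge 3$, obtains the unrooted universal network from the rooted construction of Theorem~\ref{thm_rooted_universal} by forgetting the root designation and edge orientations. Your write-up merely makes explicit the verification the paper leaves implicit (rooting an arbitrary unrooted tree at an internal vertex of degree $\ge 3$, transporting the support tree, and checking that no spurious degree-2 suppression occurs at the root), all of which is sound.
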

\begin{proof}
First, for  $n=1$ and $n=2$, an unrooted universal tree-based network trivially exists: For $n=1$, the only unrooted tree shape is a single vertex, which at the same time is the only tree-based unrooted network; for $n=2$, the only unrooted tree shape is an edge between the two leaves. Thus, any unrooted tree-based network on 2 leaves can be considered an unrooted universal tree-based network for $n=2$. Now, for $n \geq 3$, consider the construction of the non-binary universal tree-based network in the proof of Theorem~\ref{thm_rooted_universal}. By ignoring the designation of the vertex $\rho$ as root of the network and the orientation of edges, this construction yields an unrooted non-binary universal tree-based network with $n$ leaves, which completes the proof.
\end{proof}

\section{Discussion} \label{sec_discussion}  
In this note, we have constructively shown that there exist both a rooted and an unrooted non-binary universal tree-based network with $n$ leaves for all positive integers $n$. Like the rooted binary universal tree-based networks in \citet{Hayamizu2016, Zhang2016, Bordewich2017} the rooted non-binary universal tree-based network constructed in this paper is \emph{stack-free} (i.e. it has no two reticulations one of which is a parent of the other), but unlike the constructions in \citet{Hayamizu2016, Zhang2016, Bordewich2017} it is not \emph{temporal} (\emph{time-consistent}) (where a rooted network $N^r$ is called temporal if there is a mapping $t:V(N^r) \rightarrow \mathbb{N}$ such that if $(u,v)$ is a tree edge, then $t(u) < t(v)$, while if $(u,v)$ is a reticulation edge, then $t(u)=t(v)$). 
To see this, consider vertices $\rho, t_1^1$ and $r_2^1$ in $U_3$ (Figure~\ref{Fig_Embedding}(e)) or $U_5$ (Figure~\ref{Fig_Embedding}(i)). As $(\rho, t_1^1)$ is a tree edge, in a temporal network we would have $t(\rho) < t(t_1^1)$. However, as $(t_1^1, r_2^1)$ is a reticulation edge, we would also have $t(t_1^1) = t(r_2^1)$ and similarly, as $(\rho, r_2^1)$ is also a reticulation edge, we would have $t(\rho) = t(r_2^1)$. In particular by the previous case, $t(\rho) = t(r_2^1) = t(t_1^1)$, which contradicts the fact that $t(\rho) < t(t_1^1)$. 
It would thus be of interest for future research to investigate whether there also exists a rooted non-binary universal tree-based network on $n$ leaves that is temporal for all positive integers $n$. \\
Moreover, it might be of interest to infer a theoretical lower bound for the number of reticulations needed for a rooted non-binary tree-based network to be universal (as done in \citet{Bordewich2017} for the binary case) and to investigate if there exists a non-binary universal tree-based network \blue{that achieves this minimum}. Note that the construction given in this manuscript requires $(n-2)^2$ reticulations in the upper part $U_n$ and $n$ reticulations in the lower part (if the complete bipartite graph $K_{n,n}$ is used). Thus, the construction given here has $O(n^2)$ reticulations. However, our construction is possibly \blue{not minimal.}

\section*{Acknowledgement} 
We thank an anonymous referee of an earlier version of this manuscript for suggesting the simpler complete bipartite graph as an alternative to the Bene\v{s} network for the lower part of our construction. Moreover, we thank Lina Herbst for helpful discussions on the general topic and for comments on an earlier version of this manuscript.

\bibliographystyle{spbasic}\biboptions{authoryear}
\bibliography{References2}

\clearpage
\thispagestyle{empty}
\begin{figure*}[htbp]
\begin{subfigure}[b]{0.33\textwidth}
	\begin{center}
	\includegraphics[scale=0.2]{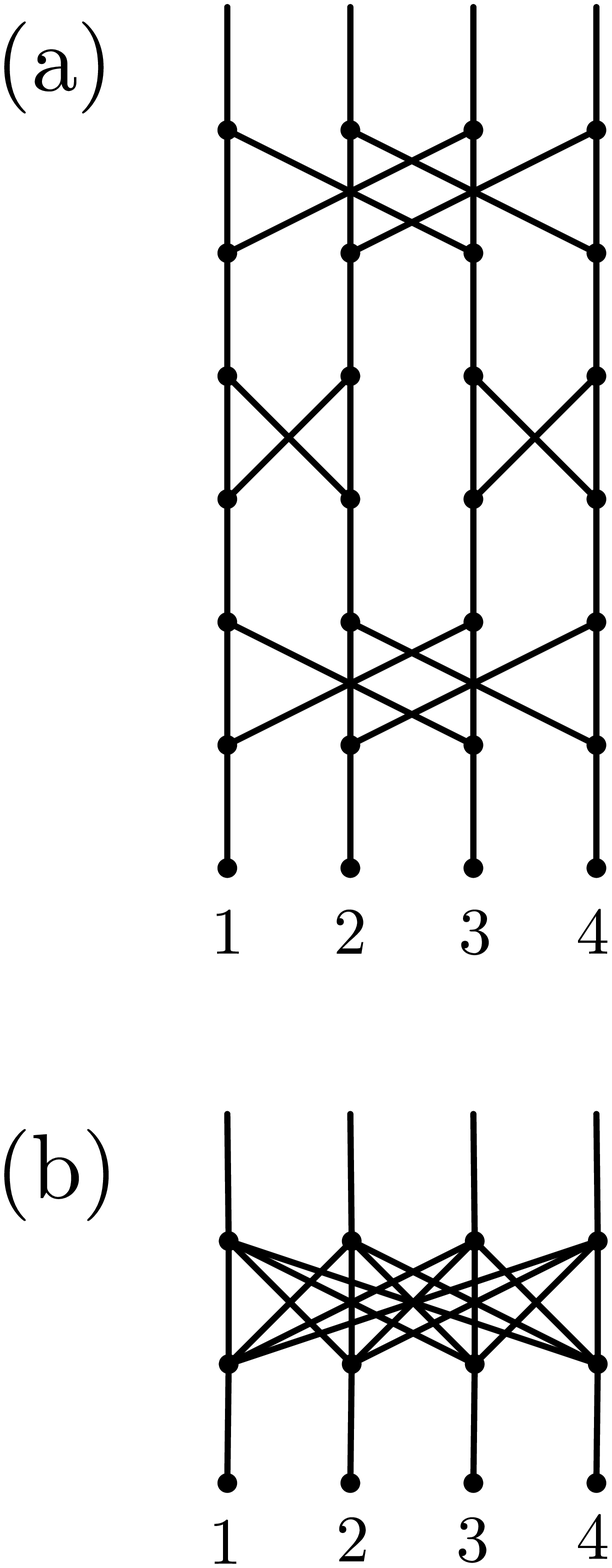}
	\end{center}
	\vspace*{5mm}
\end{subfigure}
\begin{subfigure}[b]{0.33\textwidth}
	\begin{center}
	\includegraphics[scale=0.2]{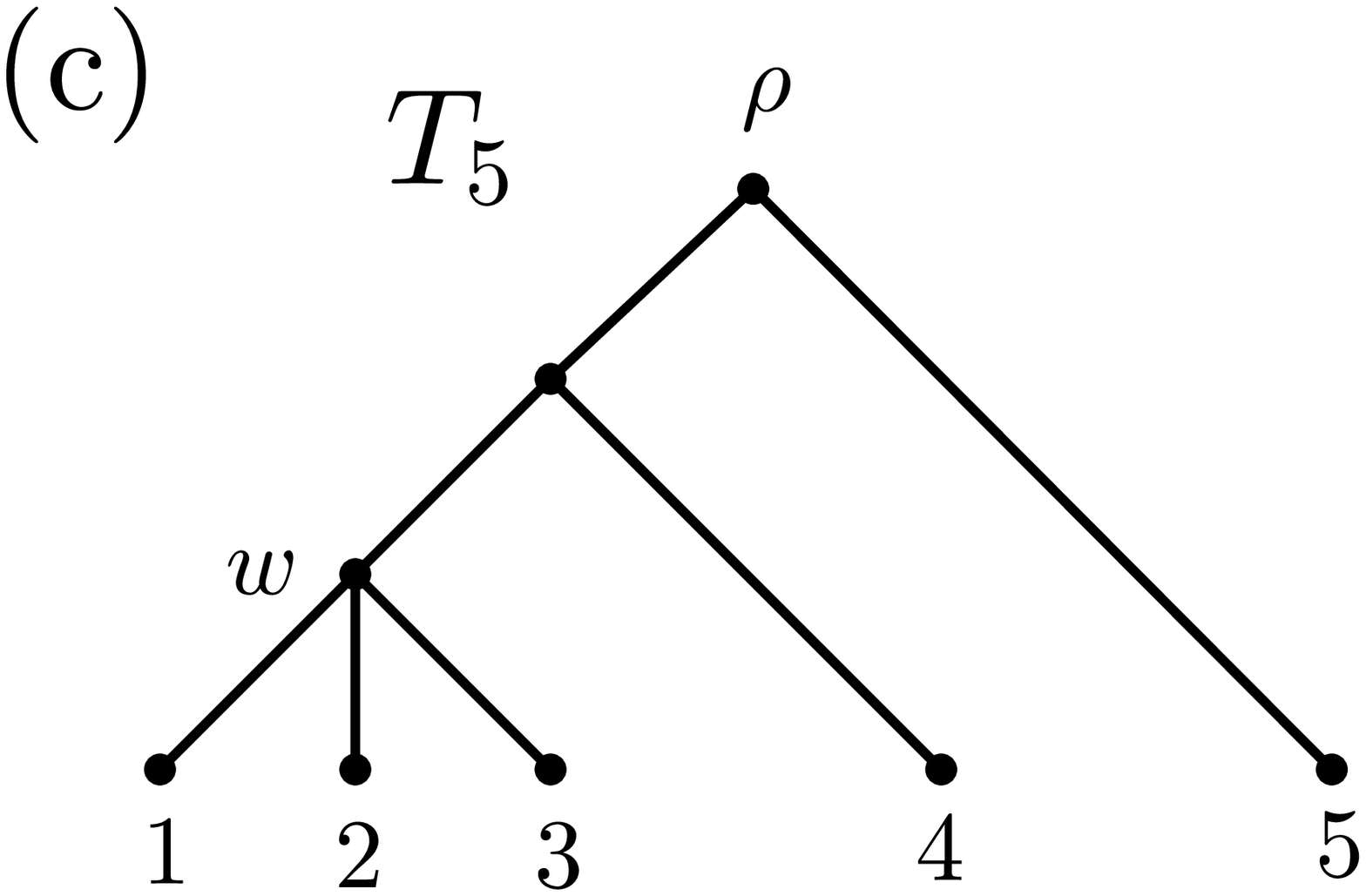} \vspace*{5mm} \\
	\includegraphics[scale=0.2]{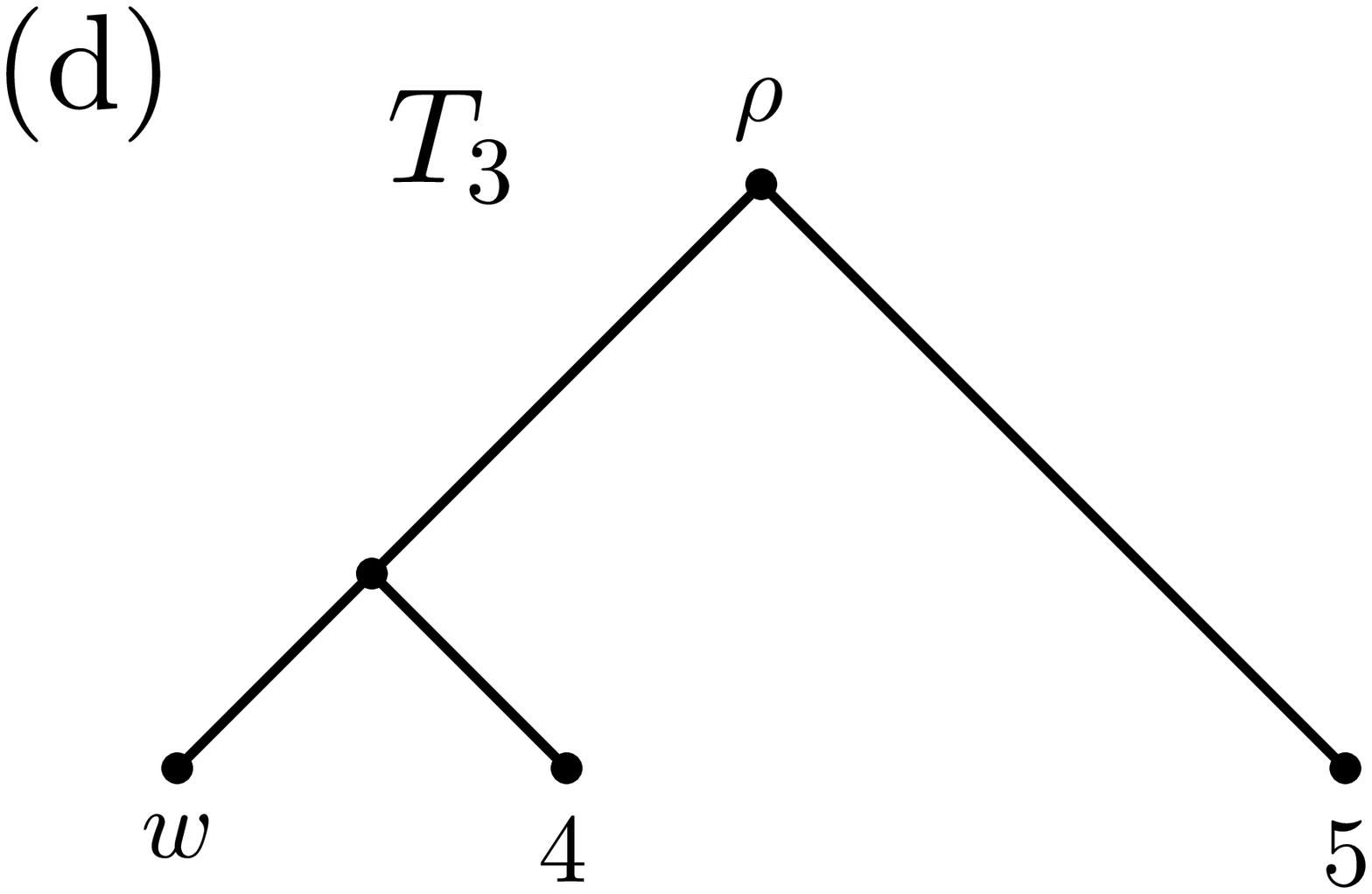} \vspace*{5mm} \\
	\includegraphics[scale=0.225]{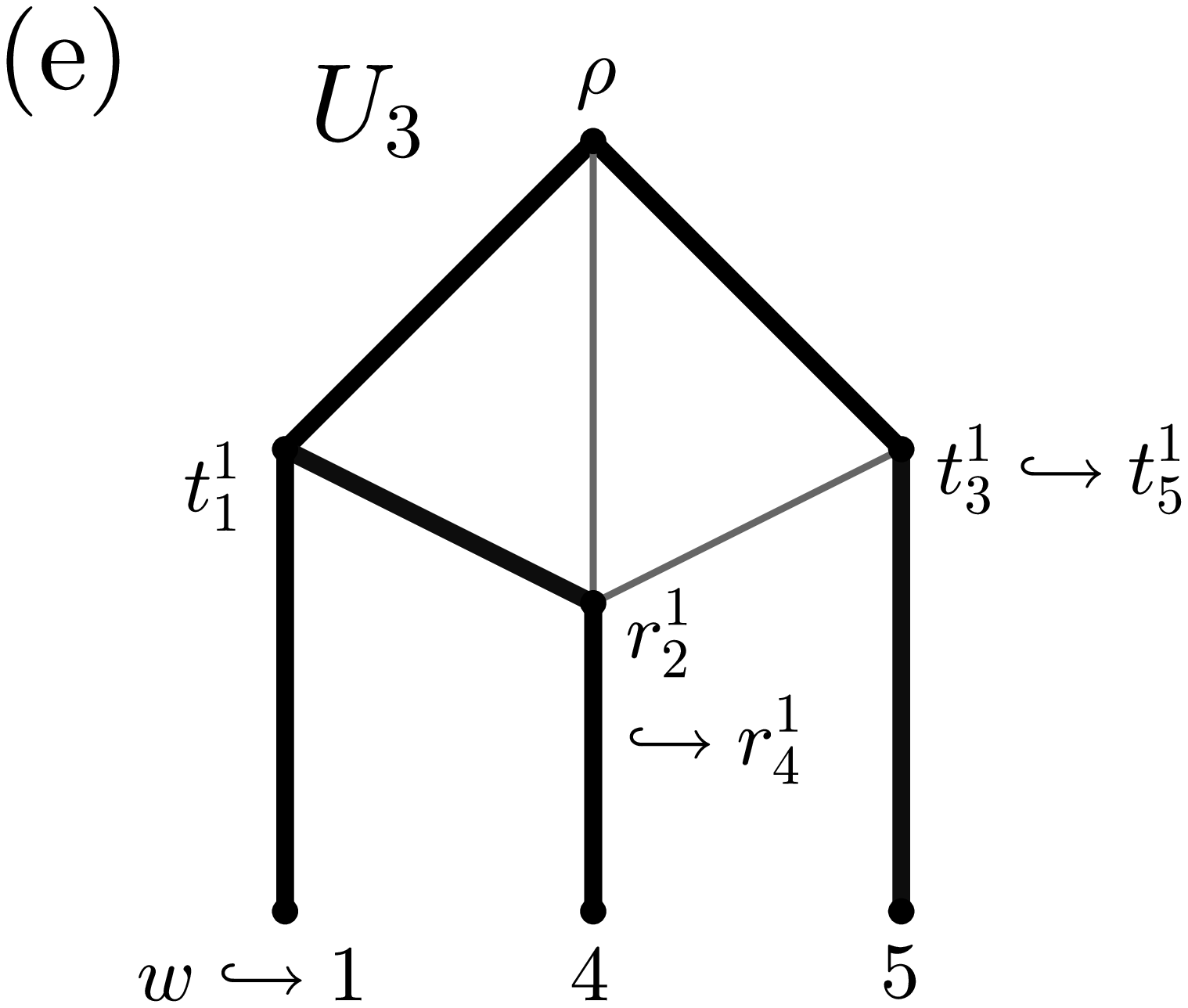}
	\end{center}
	\vspace*{5mm}
\end{subfigure}
\begin{subfigure}[b]{0.33\textwidth}
	\begin{center}
	\includegraphics[scale=0.215]{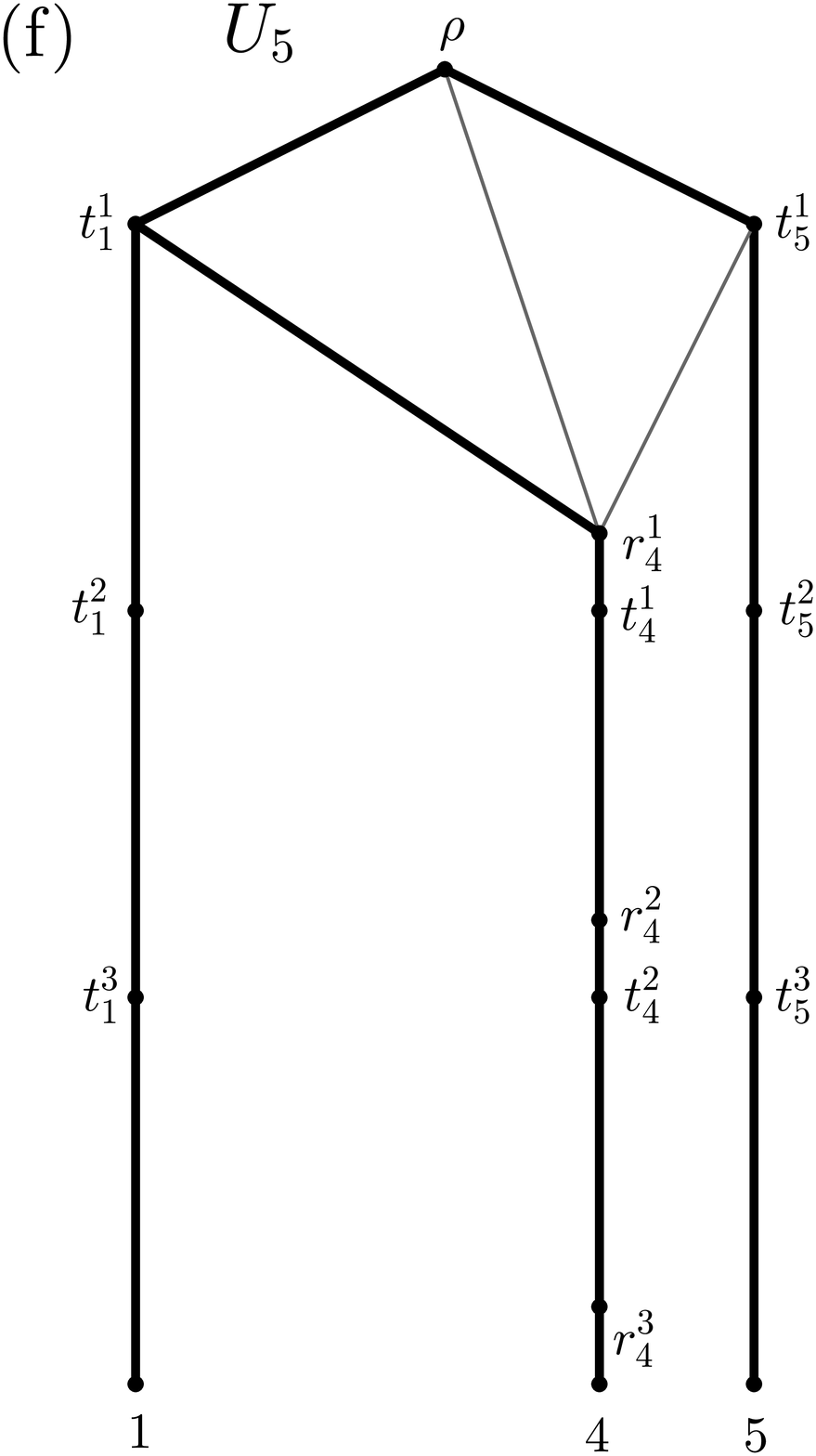} 
	\end{center}
	\vspace*{5mm}
\end{subfigure}
\begin{subfigure}[b]{0.33\textwidth}
	\begin{center}
	\includegraphics[scale=0.215]{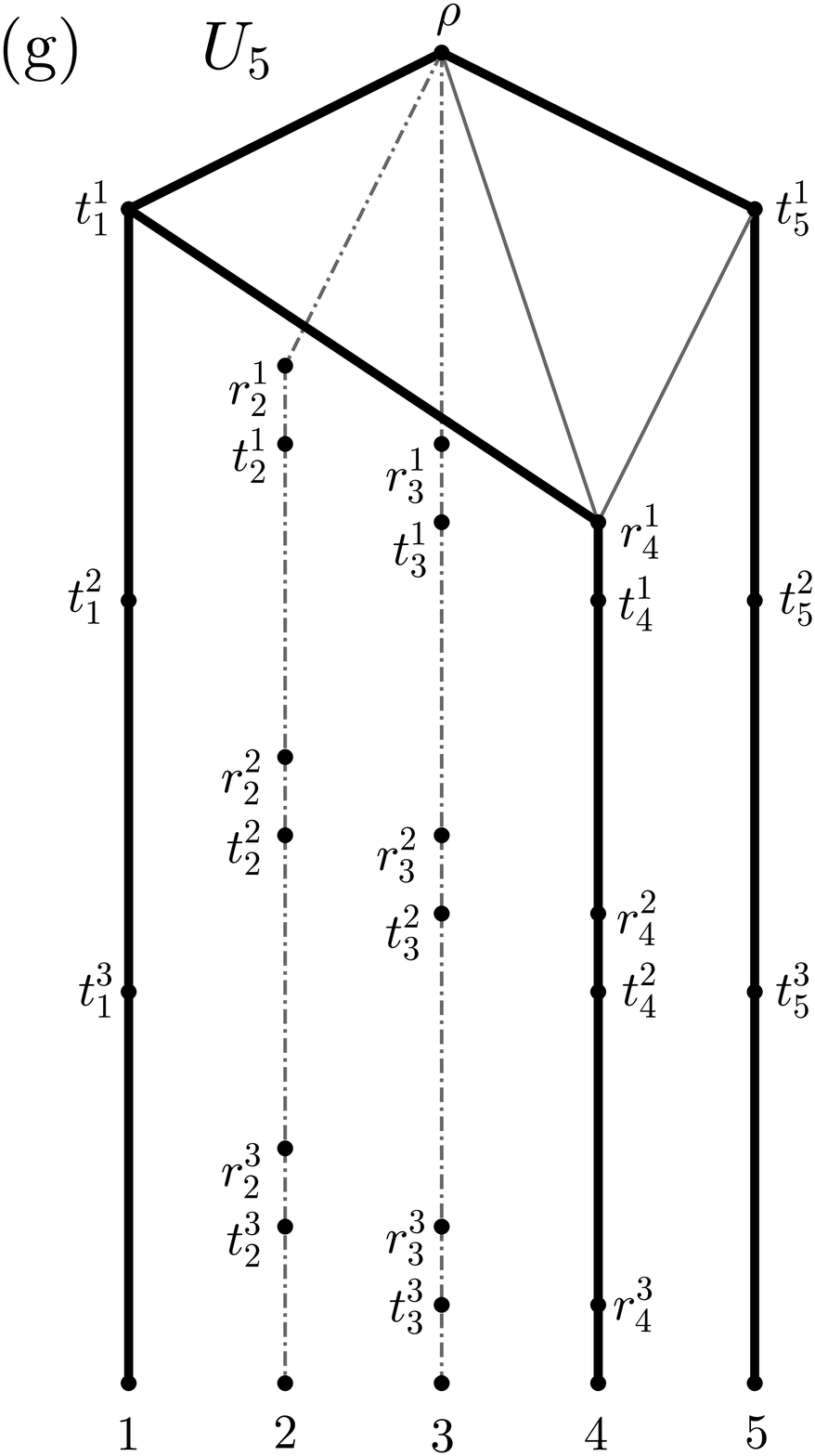} 
	\end{center}
	\vspace*{5mm}
\end{subfigure}
\begin{subfigure}[b]{0.33\textwidth}
	\begin{center}
	\includegraphics[scale=0.215]{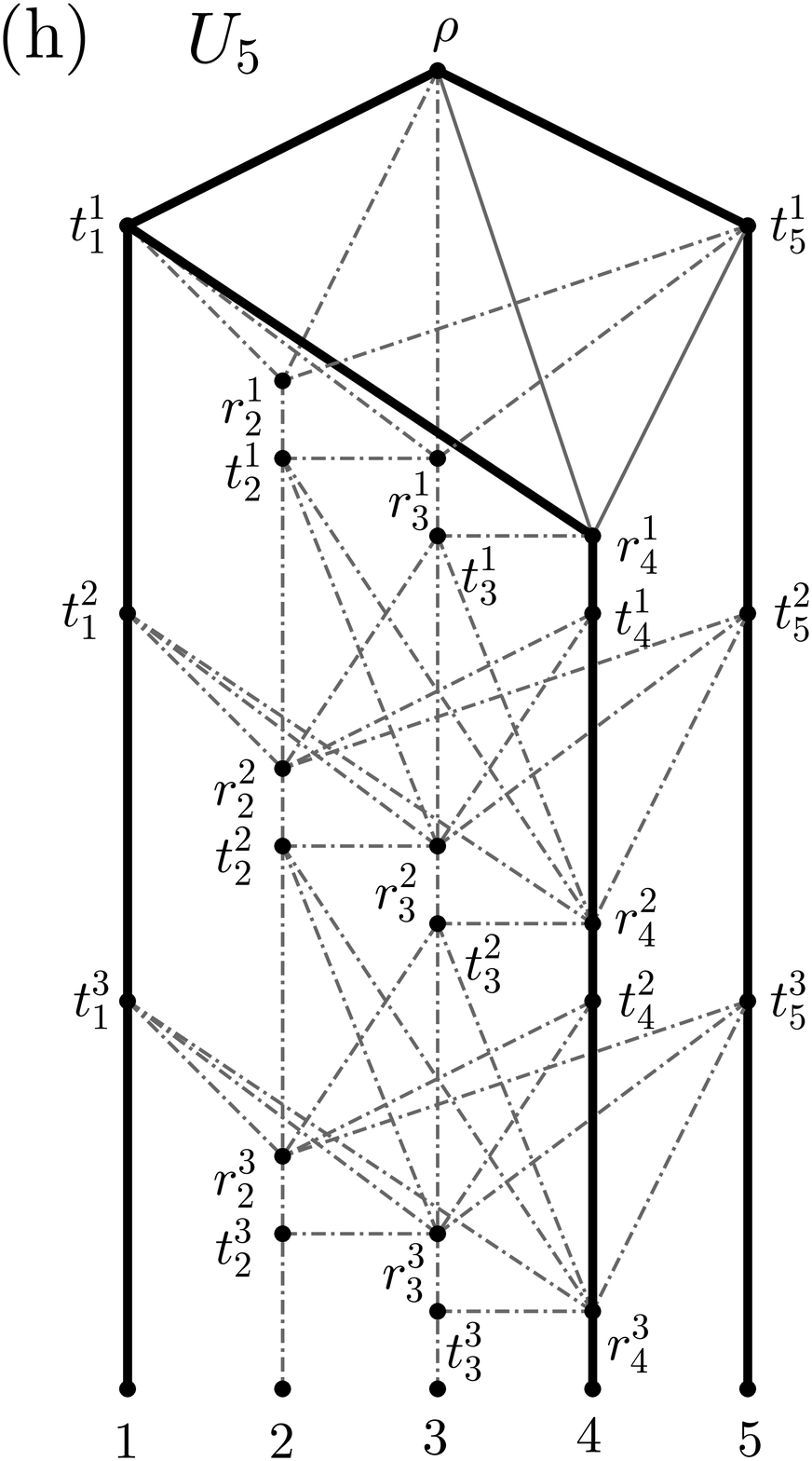} 
	\end{center}
	\vspace*{5mm}
\end{subfigure}
\begin{subfigure}[b]{0.33\textwidth}
	\begin{center}
	\includegraphics[scale=0.215]{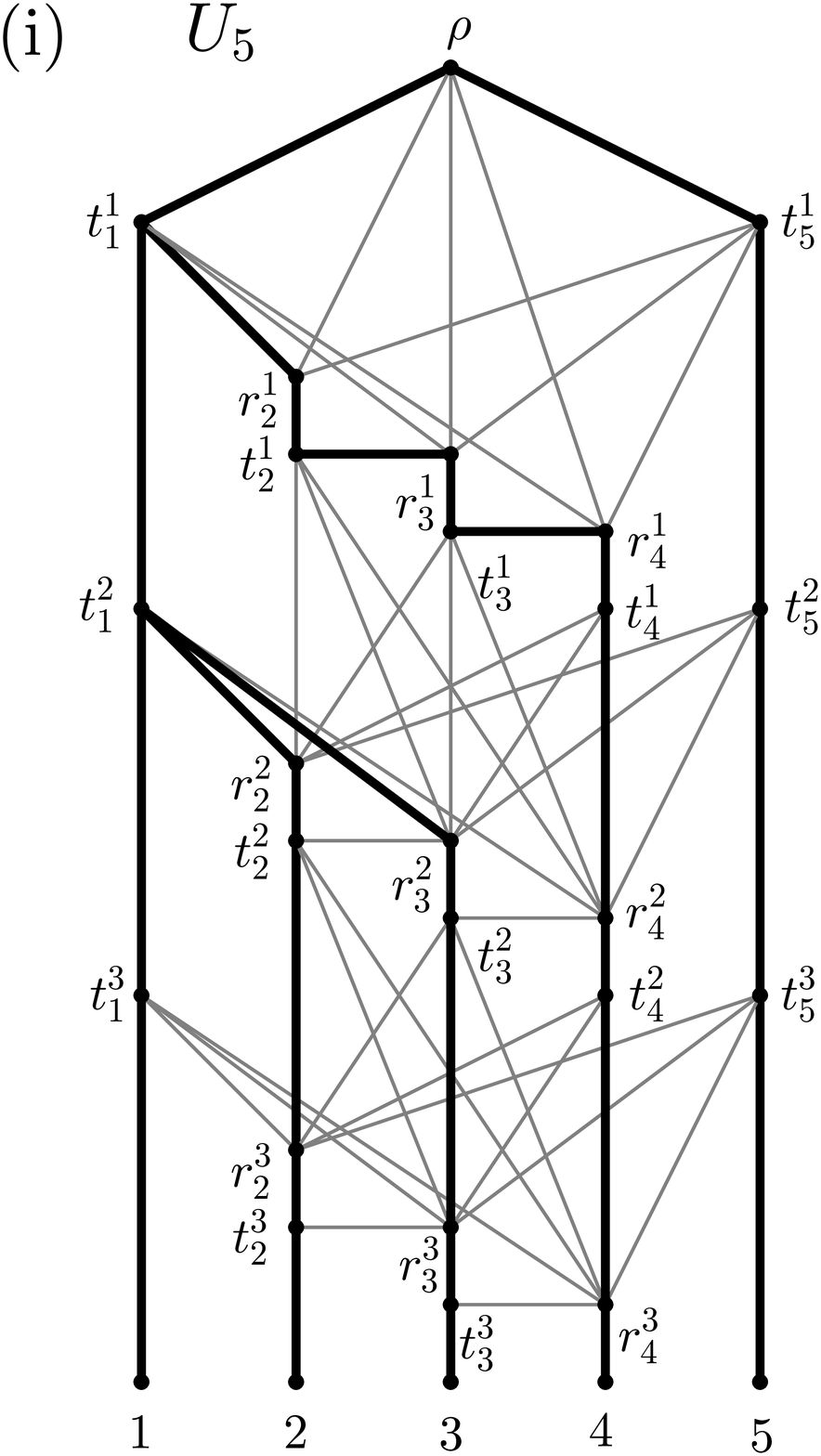} 
	\end{center}
	\vspace*{5mm}
\end{subfigure}
\caption{Illustration of the construction and concepts used in the proof of Theorem~\ref{thm_rooted_universal}. (a) and (b) show the lower part in the construction of a rooted non-binary universal tree-based network on 4 leaves in form of the Bene\v{s} network of size four (Figure taken from \citet{Bordewich2017}) (a) or the complete bipartite graph $K_{4,4}$ (b) (all edges are directed down the page). Moreover, $T_5$ is a non-binary tree shape on 5 leaves (c). We consider vertex $w$ and delete its children, which yields tree shape $T_3$ on 3 leaves (d). By the inductive hypothesis $T_3$ is a base tree of $U_3$; an embedding is depicted in bold (e). After relabeling vertices (e), 2 attachment points are added on the edges $(t_1^1,1)$ and $(t_5^1,5)$, respectively, and 4 attachment points are added on the edge $(r_4^1,4)$ (f). All new edges created in this step, e.g. $(t_1^1, t_1^2)$, are added to the embedding of $T_3$. Then, 2 edges connecting the root to leaves $2$ and $3$ are added. These edges are subdivided by introducing 6 attachment points on each edge (g). Then, the construction of $U_5$ is completed by introducing all missing edges between tree vertices and reticulation vertices and between pairs of reticulation vertices (h). In the last step, the embedding of $T_3$ is transformed back to an embedding of $T_5$ (i): Firstly, the edge $(t_1^1,r_4^1)$ (depicted in bold in (h)) is replaced by the edges $(t_1^1,r_2^1), (r_2^1, t_2^1), (t_2^1,r_3^1), (r_3^1, t_3^1)$ and $(t_3^1, r_4^1)$ (depicted in bold in (i)). In the last step the edges $(t_1^2, r_2^2), (t_1^2, r_3^2), (r_2^2, t_2^2), (t_2^2,r_2^3), (r_2^3, t_2^3), (t_2^3,2), (r_3^2, t_3^2), (t_3^2, r_3^3), (r_3^3, t_3^3)$ and $(t_3^3,3)$ are added to the embedding of $T_5$ into $U_5$. Note that in $U_5$ all horizontal edges, i.e. edges of type $(t_i^k, r_{i+1}^k)$ for $i=2,3$ and $k=1,2,3$, are directed left to right; all other edges are directed away from the root. Similarly, all edges in $T_3, T_5$ and $U_3$ are directed away from the root.}
\label{Fig_Embedding}
\end{figure*}

\end{document}